\def\cD{\mathcal{D}}
\def\cF{\mathcal{F}}
\def\cH{\mathcal{H}}
\def\cL{\mathcal{L}}
\def\dom{\mathrm{Dom}}
\newcommand{\inner}[2]{\left\langle#1,#2\right\rangle}
\newcommand{\hpart}
{\cH_{{\rm part}}}
\def\norm#1{\left\| {#1} \right\|}
\newcommand{\tensor}{\hspace{-0.15mm}\otimes\hspace{-0.15mm}}
\newcommand{\Tensor}{\bigotimes}
\def\bk{\mathbf{k}}        \def\bp{\mathbf{p}}
        \def\bA{\mathbf{A}}
\def\bsigma{\Vec{\sigma}}
\def\be{\mathbf{e}}      \def\bx{\mathbf{x}}    \def\by{\mathbf{y}}
\def\Vec#1{\boldsymbol{#1}}	
\newcommand{\BR}{\mathbb{R}}
\newcommand{\BC}{\mathbb{C}}
\def\ome{\omega}
\def\ep{\epsilon}
\newtheorem{thm}{Theorem}[section]
\newtheorem{crl}[thm]{Corollary}
\newtheorem{lmm}[thm]{Lemma}
\newcommand{\XX}{\BR^3\!\!\times\!\!\{1,2\}}
\theoremstyle{definition}
\newtheorem{dfn}[thm]{Definition}
\theoremstyle{remark}
\newtheorem{rem}{Remark}
\title[ionization energy of the semi-relativistic Pauli-Fierz model]{On the ionization energy of semi-relativistic Pauli-Fierz model for a single particle}
\author[F. Hiroshima]{Fumio Hiroshima}
\address{Fumio Hiroshima: Faculty of Mathematics, Kyushu University 819-0395, Fukuoka, Japan. 
         }
\email{hiroshima@math.kyushu-u.ac.jp}
\author[I. Sasaki]{Itaru Sasaki}
\address{Itaru Sasaki: International Young Researchers Empowerment Center, Shinshu University,
	 390-8621, Matsumoto, Japan. }
\email{isasaki@shinshu-u.ac.jp}	 
\subjclass[2000]{81Q10, 47B25.}
\keywords{\textit{semi-relativistic Pauli-Fierz model}, \textit{relativistic Schr\"odinger operator, 
          \textit{binding condition}, \textit{ionization energy}:}}         
\date{2010, March 8}
\begin{document}

\begin{abstract}      
A semi-relativistic Pauli-Fierz model is defined by the sum of the free Hamiltonian $H_{\rm f}$ of
a Boson Fock space, an nuclear potential $V$  and
a relativistic kinetic energy:
$$
H=\sqrt{[\bsigma\cdot(\bp+e\bA)]^2+M^2}-M+V+
H_{\rm f}.$$
Let $-e_0<0$ be the ground state energy of a semi-relativistic Schr\"odinger operator $$H_{\rm p}=\sqrt{\bp^2+M^2}-M+V.$$
   It is shown that the ionization energy $E^{\rm ion}$ of $H$ satisfies $$E^{\rm ion}\geq e_0>0$$ for all values of both of the coupling constant $e\in\BR$ and
the rest mass $M\geq 0$. In particular
our result includes the case of $M=0$.
\end{abstract}

\maketitle

\section{Introduction}
In this paper we consider a semi-relativistic Pauli-Fierz model in QED.
Throughout SRPF model is a shorthand for semi-relativistic Pauli-Fierz model.
This model describes a dynamics of a semi-relativistic charged particle moving
in the three-dimensional Euclidean space $\BR^3$
under the influence of a real-valued nuclear potential $V$ and  a quantized electromagnetic field.

The Hilbert space $\cH$ of the total system  is the tensor product Hilbert
space of
$\hpart =L^2(\BR^3)\otimes\BC^2$
 and the Boson Fock space
 $\cH_\mathrm{phot}  $ over $L^2(\XX)$.
Here $\hpart $ describes  the state space of a semi-relativistic charged particle with spin $1/2$ and
$\cH_\mathrm{phot}$ that of photons.
 The total Hamiltonian $H^V$ on $\cH$ is given by a minimal coupling to a  quantized electromagnetic field $\bA$ and is of the form
\begin{align}
  H^V =  \sqrt{[\bsigma\cdot(\bp+e\bA)]^2+M^2}
  -M + V + H_{\rm f},
\end{align}
where
$\bp=-i\nabla=(p_1,p_2,p_3)$ denotes the generalized gradient operator,
$e\in\BR$ is the  coupling constant,
$M\geq 0$ the rest mass, $H_\mathrm{f}$
 the free Hamiltonian of
 $\cH_\mathrm{phot}$,
 and $\bsigma=(\bsigma_1,\bsigma_2,\bsigma_3)$ denote  $2\times2$
Pauli matrices satisfying relations: $$\bsigma_1\bsigma_2=i\bsigma_3,\quad
  \bsigma_2\bsigma_3=i\bsigma_1,\quad
  \bsigma_3\bsigma_1=i\bsigma_2.$$
We set $H^0=H^{V=0}$.
Let $E^V=\inf\mathrm{Spec}(H^V)$ be the lowest energy spectrum of $H^V$.
Intuitively $E^0$ describes the energy of
a particle going away from a nucleus
and positivity $E^0-E^V>0$ suggests that ground states of $H^V$ are stable.
The existence of ground states can be indeed shown
under condition
$E^0-E^V>0$.
 The ionization energy is then
defined by
\begin{align}
   E^\mathrm{ion} = E^0 - E^V.
\end{align}
Suppose that
 a semi-relativistic Schr\"odinger  $H_\mathrm{p}=\sqrt{\bp^2+M^2}-M+V$
 has a negative energy ground state $\phi_0$ such that
 $H_\mathrm{p} \phi_0 = -e_0 \phi_0$ with
 $e_0>0$.
Then the main result on this paper is to show that \begin{align}
   E^\mathrm{ion}  \geq e_0 >0.  \label{pos}
\end{align}
It is emphasized that
(\ref{pos})
    is shown for all $(e, M)\in \BR\times[0,\infty)$ and that
    $E^{\rm ion}$ is compared with the lowest energy of
    $H_{\rm p}$.
In order to prove
(\ref{pos}),
 we use a natural extension  of the strategy
developed in \cite{gll} to a semi-relativistic case.
Namely we show that
the ionization energy is greater than
the absolute value of the lowest energy of
the {semi-relativistic} Schr\"odinger operator $H_\mathrm{p}$.

In the case of the non-relativistic Pauli-Fierz model given by
$$\frac{1}{2M}[\bsigma\cdot(\bp+e\bA)]^2+V+H_{\rm f},$$ the positivity of the ionization energy is shown in
\cite{gll,ll}.
We also refer to see the book \cite{ls} and  references therein
for related results.
In the quantum field theory one important
task is to show the existence of ground states,
which
is shown in general
by showing or assuming the positivity of
an  ionization energy. See e.g.,\cite{gll,hs,ll,sasa}.
In this paper we are not concerned with
the existence of ground states of SRPF model,
but this will be done in another paper \cite{sasa2}.

{\it Note added in proof}:
In \cite{kms},  M. K\"onenberg, O. Matte, and E. Stockmeyer also recently prove
$E^\mathrm{ion}>0$
of SRPF model in the case of $V(\bx)=-\gamma/|\bx|$, $\gamma>0$, and $M=1(>0)$, but the ionization energy is not
compared with the lowest energy of
$H_{\rm p}$
but with
     a standard Schr\"odinger operator $\frac{1}{2M}\bp^2+V$.
So it is quite different form ours.
The existence of ground states of SRPF model
is also shown in \cite{kms}.

\section{Definition and Main Result}
We begin with defining SRPF model in a rigorous manner.
We define the Hamiltonian of SRPF model by a quadratic form.

{\bf (Photons)}
The Hilbert space for photons
  is given by \begin{align}
  \cH_\mathrm{phot}
   =  \bigoplus_{n=0}^\infty \left[   \Tensor_\mathrm{s}^n L^2(\XX)\right],
\end{align}
 where  $\tensor_\mathrm{s}^n$ denotes the $n$-fold symmetric tensor product with
$\tensor_\mathrm{s}^0 L^2(\XX)=\BC$.
The smeared annihilation operators in $\cH_\mathrm{phot}$ are denoted by $a(f),  f\in L^2(\XX)$.
The adjoint of $a(f)$, $a^\ast(f)$, is called the creation operator.
The annihilation operator and the creation operator satisfy canonical commutation relations.
The Fock vacuum is defined by
$\Omega_\mathrm{phot}=
1\oplus 0\oplus 0\cdots\in \cH_\mathrm{phot}$.
For a closed operator $T$ on $L^2(\XX)$, the second quantization of $T$ is denoted by \begin{align}
d\Gamma(T):  \cH_\mathrm{phot}\to
  \cH_\mathrm{phot}.
  \end{align}
  Let $\omega:\BR^3\to [0,\infty)$ be a Borel measurable function such that $0<\omega(\bk)<\infty$ for almost every
$\bk\in\BR^3$. We also denote by the same symbol $\omega$ the multiplication operator by the function $\omega$,
which acts  in $L^2(\XX)$ as $(\omega f)(\bk,j)=\omega(\bk)f(\bk,j)$.
Then the free Hamiltonian of $\cH_\mathrm{phot}  $
is defined by
\begin{align}
  H_\mathrm{f} = d\Gamma(\omega).
\end{align}

{\bf (Charged particle)}
The Hilbert space for the particle state
is defined by
\begin{align}
  \hpart  = L^2(\BR^3)\tensor \BC^2, \end{align}
where $\BC^2$ describes spin $1/2$ of the particle.
Then the particle Hamiltonian is given by
a semi-relativistic Schr\"odinger operator: \begin{align}
 H_\mathrm{p} = \sqrt{\bp^2+M^2}-M + V.
\end{align}

{\bf (SRPF model)}
The Hilbert space of the SRPF model is defined by
\begin{align}
  \cH= \hpart  \tensor \cH_\mathrm{phot},
\end{align}
and the decoupled Hamiltonian of the system  is given by
\begin{align}
H_{\rm p}\otimes 1+1\otimes H_{\rm f}.
\end{align}
We introduce an interaction minimally coupled to
$H_{\rm p}\otimes 1+1\otimes H_{\rm f}$.
Let $\be^{(\lambda)}:\BR^3\to\BR^3,\lambda=1,2$,
be polarization vectors such that
\begin{align}
 \be^{(\lambda)}(\bk) \cdot \be^{(\mu)}(\bk) = \delta_{\lambda,\mu}, \quad
 \bk \cdot \be^{(\lambda)}(\bk)=0, \quad
  \lambda,\mu\in \{1,2\}.
\end{align}
We set
 $e^{(\lambda)}(\bk)=(e_1^{(\lambda)}(\bk),e_2^{(\lambda)}(\bk),e_3^{(\lambda)}(\bk))$
 and
suppose that each component $e_j^{(\lambda)}(\bk)$ is a Borel measurable function in $\bk$.
Let $\Lambda\in L^2(\BR^3)$ be a function such that
\begin{align}
 \ome^{-1/2}\Lambda\in L^2(\BR^3). 
\end{align}
We set
\begin{align}
  g_j(\bk,\lambda;\bx) = \omega(\bk)^{-1/2}\Lambda(\bk)e_j^{(\lambda)}(\bk)e^{-i\bk\cdot\bx}.
\end{align}
For each $\bx\in\BR^3$, $g_j(\bx)=g_j(\cdot,\cdot;\bx)$ can be  regarded as an element of
$L^2(\XX)$.
The quantized electromagnetic field at $\bx\in\BR^3$ is defined by
\begin{align}
 \bA_j(\bx)= \frac{1}{\sqrt{2}}\overline{\left[a(g_j(\bx))+a^*(g_j(\bx))\right]},
\end{align}
where  $\bar{T}$ denotes
the closure of closable operator $T$.
The quantized electromagnetic field $\bA(\bx)=(\bA_1(\bx),\bA_2(\bx),\bA_3(\bx))$ satisfies the Coulomb gage condition:
$$\sum_{j=1}^3\frac{\partial
\bA_j(\bx)}{\partial \bx_j}=0.$$
 It is known that $\bA_j(\bx)$ is self-adjoint
for each $\bx\in\BR^3$.
The Hilbert space $\cH$ can be identified as
\begin{align}
  \cH
\cong \int_{\BR^3}^\oplus \BC^2\tensor \cH_\mathrm{phot} d\bx,   \label{ident}
\end{align}
where $\int^\oplus \cdots$ denotes
a constant fiber direct integral \cite{rs4}.
The quantized electromagnetic field on the total Hilbert space is defined by
the constant fiber direct integral of $\bA_j(\bx)$:
\begin{align}
  \bA_j = \int_{\BR^3}^\oplus \bA_j(\bx)d\bx.
\end{align}
Let $C_0^\infty(\BR^3)$ be a set of
infinite times differentiable functions with
a compact support.
The finite particle subspace over  $C_0^\infty=C_0^\infty(\BR^3)$ is defined by
\begin{align}
 \cF_{\rm fin}=\cL[\{ a^*(f_1) \cdots a^*(f_n) \Omega_\mathrm{phot}, \Omega_\mathrm{phot}|
 f_j\in C_0^\infty, j=1,\dots,n, n\in\mathbb N \}],
\end{align}
where $\cL[\cdots]$ denotes the linear hull of
$[\cdots]$. We set
\begin{align}
 \cD = C_0^\infty(\BR^3;\BC^2) \hat\tensor  \cF_{\rm fin},
\end{align}
where the symbol $\hat\tensor$ denotes the algebraic tensor product.
In what follows for notational  convenience
we omit $\otimes$ between $\hpart $ and $ \cH_\mathrm{phot}$, i.e., we write
$p_j$ for $p_j\tensor 1$, $H_\mathrm{f}$ for $1\tensor H_\mathrm{f}$ and $V$ for
$V\tensor 1$, etc.
We define the  non-negative quadratic form on $\cD\times\cD$ by
\begin{align}
 K_{\bA}(\Psi,\Phi) =
 \sum_{j=1}^3
 \inner{
 \bsigma_j(p_j+e\bA_j)\Psi}
 {\bsigma_j(p_j+e\bA_j)
 \Phi} + M^2\inner{\Psi}{\Phi}.
 \label{qua}
 \end{align}
Since $K_{\bA}$ is closable, we denote its closure by $\bar K_{\bA}$.
The semi-bounded quadratic form $\bar K_{\bA}$ defines the unique non-negative self-adjoint operator $H_{\bA}$ such that
$$\dom(H_{\bA}^{1/2})=Q(\bar K_{\bA})$$ and
$$\bar K_{\bA}(\Psi,\Phi)=(H_{\bA}^{1/2} \Psi, H_{\bA}^{1/2}\Phi)$$ for
$\Psi,\Phi\in Q(\bar{K}_{\bA})$.
Here $Q(X)$ denotes the form domain of $X$.
We note that
\begin{eqnarray*}
&&
 Q(H_{\bA}) = \left\{\Psi\in \cH| \exists \{\Psi_n\}_n \subset \cD   \text{ s.t. } \Psi_n \to \Psi,\right.\\
 &&\hspace{5cm}\left.
          K_{\bA}(\Psi_m-\Psi_n,\Psi_m-\Psi_n)\to 0, n,m\to\infty\right\},   \notag \\
&&
 \inner{H_{\bA}^{1/2}\Psi}{H_{\bA}^{1/2}\Psi}= \lim_{n\to\infty}K_{\bA}(\Psi_n,\Psi_n)
 \text{ for }  \Psi\in Q(H_{\bA}).
\end{eqnarray*}
    Since $\cD\subset Q(H_{\bA})$,
     we have
     $ \cD \subset  \dom(H_{\bA}^{1/2})$.
\begin{rem}
It is not trivial to see the essential self-adjointness or the self-adjointness of $[\bsigma\cdot
(\bp+e \bA)]^2+M^2$.
Note that however on $\cD$
we see that
$$H_{\bA}=[\bsigma\cdot (\bp+e \bA)]^2+M^2$$
and then
$H_{\bA}^{1/2}$ can be regarded as
a rigorous definition of
$\sqrt{[\bsigma\cdot
(\bp+e \bA)]^2+M^2}$.
\end{rem}
Now we define the Hamiltonian of SRPF model.
\begin{dfn}{\bf (SRPF model)}\\
(1) Let us define $H^0$ by
\begin{align}
 H^0 =  H_{\bA}^{1/2} -M +H_\mathrm{f}.
\end{align}
(2)
Let $V:\BR^3\to\BR$ be
such that $
   V\in L_\mathrm{loc}^2(\BR^3)$.
Then the Hamiltonian of SRPF model is defined by
\begin{align}
 H^V = H^0 + V.
\end{align}
\end{dfn}

We introduce the following conditions:
\begin{description}
 \item[(H.1)] $H_\mathrm{p}$ is essentially self-adjoint on $C_0^\infty(\BR^3)$.
 \item[(H.2)]
    $H_\mathrm{p}$ has a normalized  negative energy ground state $\phi_0$:
\begin{align}
     H_\mathrm{p} \phi_0 = - e_0\phi_0, \quad e_0>0,\quad
     -e_0=\inf\mathrm{Spec}(H_\mathrm{p}).
\end{align}
\item[(H.3)]
   $H^0$ and $H^V$ are essentially self-adjoint on $\cD$.
We denote the closure of $H^V\lceil_{\cD}$ by the same symbol.
\end{description}
\begin{rem}
Although it is interested in
  specifying conditions for $H^V$
  to be self-adjoint
  or essential self-adjoint    on some domain,
in this paper we do not discuss it.
\end{rem}
Let
$  E^0 = \inf \mathrm{Spec}(H^0)$
and
$ E^V = \inf \mathrm{Spec}(H^V)$.
The ionization energy
is defined by
\begin{align}
 E^\mathrm{ion} = E^0-E^V.
\end{align}
If $V\leq0$, then  it is trivial to see that
$E^{\rm ion}\geq 0$.
The main result in this paper is however
as follows:
\begin{thm}\label{t1}
 Assume (H.1)--(H.3). Then
$
  E^\mathrm{ion} \geq  e_0>0$
for all $(e,M)\in\BR\times[0,\infty)$.
\end{thm}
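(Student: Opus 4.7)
Our plan is to establish $E^V \leq E^0 - e_0$ by the variational principle, adapting the trial-state construction of \cite{gll} to the relativistic dispersion: for every $\epsilon>0$ we exhibit a normalized $\Psi_\epsilon\in\cD$ with $\langle\Psi_\epsilon,H^V\Psi_\epsilon\rangle\leq E^0-e_0+\epsilon$.

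\textbf{Pauli--Fierz gauge and fiber decomposition.} With $P_{\rm f}:=d\Gamma(\bk)$, introduce the unitary $(U\Psi)(\bx):=e^{-i\bx\cdot P_{\rm f}}\Psi(\bx)$ on $\cH$. Using $\bA(\bx)=e^{-i\bx P_{\rm f}}\bA(0)e^{i\bx P_{\rm f}}$ one checks $U^{-1}\bA(\bx)U=\bA(0)$ and $U^{-1}\bp U=\bp-P_{\rm f}$, so that
\[ U^{-1}H^V U = \sqrt{[\bsigma\cdot(\bp+\Vec{W})]^2+M^2}-M+V+H_{\rm f},\qquad \Vec{W}:=-P_{\rm f}+e\bA(0). \]
Now $\bp$ acts only on $L^2(\BR^3_\bx)$ and commutes with $\Vec{W}$. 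Fourier-transforming in $\bx$, $U^{-1}H^0 U=\int^{\oplus}_\bp\cK(\bp)\,d\bp$ with $\cK(\bp):=\sqrt{[\bsigma\cdot(\bp+\Vec{W})]^2+M^2}-M+H_{\rm f}$; rotational symmetry and the operator concavity of $\sqrt{\,\cdot\,}$ identify $E^0=\inf\mathrm{Spec}\,\cK(0)$.

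\textbf{Trial state and energy estimate.} Choose a normalized $\Xi_\epsilon\in\BC^2\otimes\cF_{\rm fin}$ with $\langle\Xi_\epsilon,\cK(0)\Xi_\epsilon\rangle\leq E^0+\epsilon$ and (by symmetrization) $\langle\Xi_\epsilon,\Vec{W}\Xi_\epsilon\rangle=0$; let $\phi_0$ be a real-valued normalized ground state of $H_{\rm p}$ and set $\Psi_\epsilon:=U(\phi_0\otimes\Xi_\epsilon)\in\cD$. The $V$, $-M$, and $H_{\rm f}$ pieces of $U^{-1}H^V U$ split cleanly into $\langle\phi_0,V\phi_0\rangle-M+\langle\Xi_\epsilon,H_{\rm f}\Xi_\epsilon\rangle$. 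For the kinetic piece combine (i) the operator Jensen (Hansen--Pedersen) inequality $\langle\Xi_\epsilon|\sqrt{A}|\Xi_\epsilon\rangle\leq\sqrt{\langle\Xi_\epsilon|A|\Xi_\epsilon\rangle}$ for $A:=[\bsigma\cdot(\bp+\Vec{W})]^2+M^2$; (ii) reality of $\phi_0$, which kills the cross term $2\bp\cdot\Vec{W}$ in expectation and leaves partial expectation $\bp^2+\tilde M^2$ with $\tilde M^2:=M^2+\langle\Xi_\epsilon,(\bsigma\cdot\Vec{W})^2\Xi_\epsilon\rangle$; and (iii) the pointwise mass-shift $\sqrt{\bp^2+\tilde M^2}\leq\sqrt{\bp^2+M^2}+(\tilde M-M)$. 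Using $\langle\phi_0,H_{\rm p}\phi_0\rangle=-e_0$ then yields
\[ \langle\Psi_\epsilon,H^V\Psi_\epsilon\rangle \;\leq\; -e_0+(\tilde M-M)+\langle\Xi_\epsilon,H_{\rm f}\Xi_\epsilon\rangle, \]
which one wishes to match to $-e_0+\langle\Xi_\epsilon,\cK(0)\Xi_\epsilon\rangle\leq -e_0+E^0+\epsilon$; letting $\epsilon\downarrow 0$ then delivers the theorem.

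\textbf{Main obstacle.} The crux of the proof is closing the last inequality, which demands $\tilde M\leq\langle\Xi_\epsilon,\sqrt{(\bsigma\cdot\Vec{W})^2+M^2}\Xi_\epsilon\rangle+O(\epsilon)$, i.e., absorbing the scalar Jensen concavity gap for $\sqrt{\,\cdot\,}$ on $\Xi_\epsilon$. In the non-relativistic case of \cite{gll} this step does not arise because the squared kinetic factors exactly along the particle/photon decomposition; here it is the genuinely new ingredient. One resolves it by selecting $\Xi_\epsilon$ from within a spectral refinement of $(\bsigma\cdot\Vec{W})^2$ before enforcing near-minimization of $\cK(0)$, and the uniformity over $(e,M)\in\BR\times[0,\infty)$---in particular the massless case $M=0$---is ensured because the pointwise mass-shift bound in (iii) holds uniformly in $M\geq 0$.
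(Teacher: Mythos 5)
There is a genuine gap here, and you have in fact flagged it yourself under ``Main obstacle'' without closing it. The inequality your energy estimate needs, namely
\begin{align*}
\sqrt{M^2+\inner{\Xi_\epsilon}{(\bsigma\cdot\Vec{W})^2\Xi_\epsilon}}\;\leq\;\inner{\Xi_\epsilon}{\sqrt{(\bsigma\cdot\Vec{W})^2+M^2}\,\Xi_\epsilon}+O(\epsilon),
\end{align*}
runs \emph{against} Jensen: concavity of $t\mapsto\sqrt{t+M^2}$ gives exactly the reverse bound, with a strict deficit unless the spectral measure of $(\bsigma\cdot\Vec{W})^2$ in the state $\Xi_\epsilon$ is essentially a point mass. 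Your proposed remedy --- picking $\Xi_\epsilon$ inside a narrow spectral window of $(\bsigma\cdot\Vec{W})^2$ --- is nowhere shown to be compatible with the other constraint $\inner{\Xi_\epsilon}{\cK(0)\Xi_\epsilon}\leq E^0+\epsilon$; the two operators do not commute, and a near-minimizer of $\cK(0)$ need not be nearly an eigenvector of $(\bsigma\cdot\Vec{W})^2$. This is not a technicality to be absorbed: it is the central difficulty of the relativistic case, and the proof is not complete without it. A second, independent gap is the identification $E^0=\inf\mathrm{Spec}\,\cK(0)$, which your trial state requires in order for $\Xi_\epsilon$ to exist. The fiber decomposition only yields $E^0=\inf_{\bp}\inf\mathrm{Spec}\,\cK(\bp)$, and the statement that this infimum sits at total momentum zero is a known nontrivial problem for such models (cf.\ Loss--Miyao--Spohn \cite{mls}); note moreover that concavity of $\bp\mapsto\inf\mathrm{Spec}\,\cK(\bp)$ combined with rotation invariance would make $\bp=0$ a \emph{maximum}, not a minimum, so the one-line justification offered points in the wrong direction.

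The paper avoids both issues by a more elementary route that you should compare with. It never passes to fibers: it takes an $\epsilon$-minimizer $F_0$ of $H^0$ on the full Hilbert space, forms $\Phi_\by=f_0(\hat{\bx})U_\by F_0$ with $f_0$ a real $\epsilon$-minimizer of $H_{\rm p}$, and expands $\int d\by\,\inner{\Phi_\by}{H^V\Phi_\by}$ by an IMS-type double-commutator identity. The entire localization error is then controlled by the operator inequality $\frac{1}{2}[\Omega_\bA(\bp+\bk)+\Omega_\bA(\bp-\bk)-2\Omega_\bA(\bp)]\leq\sqrt{\bk^2+M^2}-M$ on $\dom(\Omega_\bA(\bp))$, proved simply by squaring both sides. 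Integrated against $|\hat{f}_0(\bk)|^2$ this error becomes $\inner{f_0}{(\sqrt{\bp^2+M^2}-M)f_0}$, which recombines with $\inner{f_0}{Vf_0}$ into $\inner{f_0}{H_{\rm p}f_0}\leq-e_0+\epsilon$, and averaging over $\by$ produces the required trial vector. In short: replace the attempt to push a photon expectation through the square root by a ``second difference'' operator bound in the translation parameter $\bk$; that is the step your proposal is missing.
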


\section{Proof of Theorem \ref{t1}}
Throughout this section we assume
(H.1)--(H.3).
We fix an arbitrary small $\ep>0$.
Let $F_0$ and $f_0$ be $\ep$-minimizers of $H^0$ and $H_\mathrm{p}$, respectively, i.e.,
\begin{align}
 & \inner{F_0}{H^0 F_0}_{\cH} < E^0+\ep, \quad \norm{F_0}_{\cH}=1, \quad F_0 \in\cD,    \label{min1}\\
 & \inner{f_0}{H_\mathrm{p} f_0}_{\hpart }
   < -e_0 +\ep,     \quad \norm{f_0}_{\hpart }=1, \quad f_0 \in C_0^\infty(\BR^3). \label{min2}
\end{align}
Since $\cD$ and $C_0^\infty(\BR^3)$ are cores for $H^0$ and $H_\mathrm{p}$, respectively,
we can choose a  minimizer satisfying \eqref{min1} and \eqref{min2}.
Recall that  $H_\mathrm{p}$ has a ground state $\phi_0$.
Clearly,
its complex conjugate $\phi_0^*$ is also the ground state of $H_\mathrm{p}$.
Hence
we may assume that $\phi_0$ is real without loss of generality.
Therefore we can choose a real-valued  $\ep$-minimizer $f_0$.
For each $\by\in\BR^3$, we set
\begin{align*}
 U_\by = \exp(-i\by\cdot\bp)\tensor \exp(-i\by\cdot d\Gamma(\bk)).
\end{align*}
The unitary operator $U_\by$ is the parallel translation by the vector $y\in\BR^3$.
It can be shown that $ U_\by \cD =\cD $ and $H^0$ is translation invariant:
\begin{align}
U_\by^* H^0 U_\by =H^0.   \label{trans}
\end{align}
Set
$$
\Omega_\bA(\bp)=
H_{\bA}^{1/2}$$
\begin{lmm}
\label{l0}
Let
$
 \Phi_\by = f_0(\hat{\bx})F_\by
$, where $f_0(\hat{\bx})$ denotes the multiplication by the function $f_0(\bx)$, and $F_\by=U_\by F_0$.
Then we have
\begin{align*}
& \int_{\BR^3}d\by  \inner{\Phi_\by}{H^V\Phi_\by}                                                 \notag \\
&=
  \norm{f_0}^2 \inner{F_0}{H^0F_0}        +   \inner{f_0}{Vf_0}\inner{F_0}{F_0} \\
&\quad 
 +\frac{1}{2} \int_{\BR^3} d\bk  
     |\hat{f}_0(\bk)|^2 \inner{F_0}{[\Omega_\bA(\bp+\bk)+\Omega_\bA(\bp-\bk)-2\Omega_\bA(\bp)]F_0}.         
\end{align*}
\end{lmm}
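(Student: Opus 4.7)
The plan is to decompose $H^V = \Omega_\bA(\bp) + (H_\mathrm{f}-M) + V$ and compute each summand's contribution to $\int d\by\,\langle\Phi_\by,H^V\Phi_\by\rangle$ separately. The pieces $V$ and $H_\mathrm{f}-M$ commute with the multiplication operator $f_0(\hat\bx)$ and can be handled directly via the fiber direct integral \eqref{ident}; the kinetic piece $\Omega_\bA(\bp)$ requires a Fourier expansion of $f_0(\hat\bx)$ combined with the translation invariance \eqref{trans}.

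For the commuting pieces, writing the photon-translation factor of $U_\by$ as $W_\by:=e^{-i\by\cdot d\Gamma(\bk)}$, the $\bx$-fiber of $F_\by$ is $W_\by F_0(\bx-\by)$, where $W_\by$ is unitary and commutes with $H_\mathrm{f}$. Hence
\begin{align*}
\int d\by\,\langle\Phi_\by,V\Phi_\by\rangle &= \int\!d\by\!\int\!d\bx\,V(\bx)|f_0(\bx)|^2\|F_0(\bx-\by)\|^2 = \langle f_0,Vf_0\rangle\,\|F_0\|^2,\\
\int d\by\,\langle\Phi_\by,(H_\mathrm{f}-M)\Phi_\by\rangle &= \|f_0\|^2\,\langle F_0,(H_\mathrm{f}-M)F_0\rangle,
\end{align*}
after the change of variable $\bz=\bx-\by$ decouples the two integrals.

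For the kinetic piece I would Fourier-expand $f_0(\hat\bx) = (2\pi)^{-3/2}\int d\bk\,\hat f_0(\bk)\,e^{i\bk\cdot\hat\bx}$, obtaining
\[
\langle\Phi_\by,\Omega_\bA(\bp)\Phi_\by\rangle = (2\pi)^{-3}\!\iint\!d\bk\,d\bk'\,\overline{\hat f_0(\bk)}\hat f_0(\bk')\,\langle F_\by,\,e^{-i\bk\cdot\hat\bx}\Omega_\bA(\bp)e^{i\bk'\cdot\hat\bx}F_\by\rangle.
\]
Since $e^{i\bk'\cdot\hat\bx}$ is fiber-wise scalar and thus commutes with $\bA_j$, while conjugating $\bp$ to $\bp+\bk'$, functional calculus on the form \eqref{qua} yields $e^{-i\bk\cdot\hat\bx}\Omega_\bA(\bp)e^{i\bk'\cdot\hat\bx} = e^{i(\bk'-\bk)\cdot\hat\bx}\Omega_\bA(\bp+\bk')$. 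The translation invariance \eqref{trans} extends to $U_\by^{-1}\Omega_\bA(\bp+\bk')U_\by = \Omega_\bA(\bp+\bk')$ (the $-\by$ shift in $\bA_j(\hat\bx)$ from the particle translation is cancelled by the phase $e^{i\by\cdot\bk}$ inserted by $W_\by^{-1}$ into every $g_j$), while $U_\by^{-1}e^{i(\bk'-\bk)\cdot\hat\bx}U_\by = e^{i(\bk'-\bk)\cdot\by}\,e^{i(\bk'-\bk)\cdot\hat\bx}$. Integrating $\int d\by\,e^{i(\bk'-\bk)\cdot\by} = (2\pi)^3\delta(\bk'-\bk)$ collapses the double integral to $\int d\bk\,|\hat f_0(\bk)|^2\,\langle F_0,\Omega_\bA(\bp+\bk)F_0\rangle$. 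Because $f_0$ may be taken real, $|\hat f_0(-\bk)|^2 = |\hat f_0(\bk)|^2$, so a change of variable $\bk\to-\bk$ symmetrizes this as $\|f_0\|^2\langle F_0,\Omega_\bA(\bp)F_0\rangle$ plus $\tfrac{1}{2}\int d\bk\,|\hat f_0(\bk)|^2\langle F_0,[\Omega_\bA(\bp+\bk)+\Omega_\bA(\bp-\bk)-2\Omega_\bA(\bp)]F_0\rangle$; adding the three contributions and using $H^0 = \Omega_\bA(\bp)+H_\mathrm{f}-M$ yields the lemma.

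The main obstacle is the rigorous justification of the kinetic step: the identity $e^{-i\bk\cdot\hat\bx}\Omega_\bA(\bp)e^{i\bk\cdot\hat\bx} = \Omega_\bA(\bp+\bk)$ is transparent for $H_\bA$ itself from \eqref{qua} but must be lifted to the square root by spectral calculus, and the reordering $\int d\by\,d\bk\,d\bk'$ must be secured by Fubini; both can be controlled using $F_0\in\cD$, the Schwartz decay of $\hat f_0$ (as $f_0\in C_0^\infty$), and quadratic-form bounds on $\|H_\bA^{1/4}F_0\|$ coming from \eqref{qua}, or alternatively by replacing $\Omega_\bA(\bp)$ with the bounded truncation $\Omega_\bA(\bp)(1+tH_\bA)^{-1}$, carrying out the computation, and passing to $t\downarrow 0$.
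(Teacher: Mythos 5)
Your proposal is correct and follows essentially the same route as the paper: the computation rests on the same three ingredients, namely the Fourier expansion of $f_0(\hat\bx)$, the conjugation identity $\Omega_\bA(\bp+\bk)=e^{-i\bk\cdot\bx}\Omega_\bA(\bp)e^{i\bk\cdot\bx}$ lifted from the quadratic form, and the translation invariance of $\Omega_\bA$ under $U_\by$ so that the $\by$-integration produces the delta function collapsing the double $\bk$-integral. The only difference is organizational: the paper channels the kinetic term through the double-commutator identity $\inner{f_0F}{H^Vf_0F}=\inner{f_0^2F}{H^VF}-\tfrac12\inner{F}{[f_0,[f_0,H^V]]F}$ (which automatically discards $V$ and $H_{\rm f}$ and is symmetric from the outset), whereas you expand $\inner{f_0F}{\Omega_\bA(\bp)f_0F}$ directly and recover the symmetric combination by the change of variable $\bk\to-\bk$ using that $f_0$ is real --- an equivalent bookkeeping.
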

\begin{proof}
Clearly, $\Phi_\by \in \cD$.
By using \eqref{trans}, we have
\begin{eqnarray}
 \inner{\Phi_\by}{H^V\Phi_\by}
& =& \inner{f_0(\hat{\bx})F_\by}{f_0(\hat{\bx}) H^V F_\by}
   -
   \frac{1}{2}\inner{F_\by}{[f_0(\hat{\bx}) ,[f_0(\hat{\bx}),H^V]] F_\by} \nonumber \\
& = &\inner{f_0(\hat{\bx}+\by)^2 F_0}{ H^0 F_0}
    +  \inner{f_0(\hat{\bx}+\by)^2F_0}{V(\bx+\by)  F_0}\nonumber\\
&   &\quad\quad  -
   \frac{1}{2}\inner{F_0}{[f_0(\hat{\bx}+\by) ,[f_0(\hat{\bx}+\by),\Omega_\bA(\bp)]] F_0}.
\label{i}
\end{eqnarray}
It should be noted that $\cD$ is invariant by the unitary operator $e^{i\bk\cdot\bx}$ and
\begin{eqnarray*}
 \inner{
 \Omega_{\bA}(\bp)
e^{i\bk\cdot\bx}\Psi}{
\Omega_{\bA}(\bp)e^{i\bk\cdot\bx}\Phi}
&=&
\sum_{j=1}^3
\inner{\bsigma_j
(p_j+k_j+e\bA_j)\Psi}
{\bsigma_j(p_j+k_j+e\bA_j)\Phi}\\
&=&
\inner{\Omega_{\bA}(\bp+\bk)\Psi}
{\Omega_{\bA}(\bp+\bk)\Phi}
\end{eqnarray*}
for $\Psi \in \cD$.
Hence by the definition of $\Omega_{\bA}(\bp)$
we have
\begin{align}
\label{good}
  \Omega_\bA(\bp+\bk) = e^{-i\bk\cdot\bx}\Omega_\bA(\bp)e^{i\bk\cdot\bx}.
\end{align}
Thus the last term ({\ref i}) can be computed.
We have by the inverse Fourier transformation, \begin{eqnarray*}
&&
-
   \frac{1}{2}\inner{F_0}{[f_0(\hat{\bx}+\by) ,[f_0(\hat{\bx}+\by),\Omega_\bA(\bp)]] F_0}\\
&&=
  -\frac{1}{2(2\pi)^3} \int_{\BR^6} d\bk_1d\bk_2
    \hat{f}_0(\bk_1)\hat{f}_0(\bk_2) e^{i\bk_1 \cdot \by}e^{i\bk_2 \cdot \by}
    \inner{F_0}{[e^{i\bk_1\cdot \bx},[e^{i\bk_2\cdot \bx},\Omega_\bA(\bp)]]F_0} \notag.
    \end{eqnarray*}
Using (\ref{good}) twice,  we see that
\begin{eqnarray*}
&&=
  -\frac{1}{2(2\pi)^3} \int_{\BR^6} \!\!\! d\bk_1d\bk_2 \notag \\
 &  &\qquad \times \hat{f}_0(\bk_1)\hat{f}_0(\bk_2) e^{i(\bk_1+\bk_2) \cdot \by}
    \inner{F_0}{[e^{i\bk_1 \cdot \bx},\Omega_\bA(\bp-\bk_2)-\Omega_\bA(\bp)]F_0}   \notag \\
&&=
 -\frac{1}{2(2\pi)^3} \int_{\BR^6} d\bk_1d\bk_2
    \hat{f}_0(\bk_1)\hat{f}_0(\bk_2) e^{i(\bk_1+\bk_2) \cdot \by}               \notag \\
& &\qquad  \times    \inner{F_0}{[\Omega_\bA(\bp-\bk_2-\bk_1)-\Omega_\bA(\bp-\bk_2)-\Omega_\bA(\bp-\bk_1)+\Omega_\bA(\bp)]F_0}.
\end{eqnarray*}
Under identification $\cH\cong\int^\oplus_{\BR^3}\BC^2\otimes\cH_{\rm phot}d\bx$,
$F_0$ can be regarded as a $\BC^2\tensor\cH_\mathrm{phot}$-valued
$L^2$-function.
Then
we have
\begin{eqnarray*}
&& \int_{\BR^3}d\by  \inner{\Phi_\by}{H^V\Phi_\by}                                                 \notag \\
&&= \int_{\BR^3} d\by \int_{\BR^3} d\bx  f_0(\bx+\by)^2
\left(
\inner{F_0(\bx)}{(H^0F_0)(\bx)}+
 V(\bx+\by)
          \inner{F_0(\bx)}{F_0(\bx)}\right)
            \notag \\
&&~~~~  -\frac{1}{2(2\pi)^3}  \int_{\BR^3} d\by \int_{\BR^6} d\bk_1d\bk_2
    \hat{f}_0(\bk_1)\hat{f}_0(\bk_2)
    e^{i(\bk_1+\bk_2) \cdot \by} \notag \\
&&\qquad  \times    \inner{F_0}{[\Omega_\bA(\bp-\bk_2-\bk_1)-\Omega_\bA(\bp-\bk_2)-\Omega_\bA(\bp-\bk_1)+\Omega_\bA(\bp)]F_0},
\end{eqnarray*}
where we used the fact that $\hat{f_0}(-\bk)=\hat{f_0}(\bk)^*$.
Hence we have
\begin{align*}
  &\int_{\BR^3}d\by  \inner{\Phi_\by}{H^V\Phi_\by}    \\
  &=  \norm{f_0}^2 \inner{F_0}{H^0F_0}       
   +   \inner{f_0}{Vf_0}\inner{F_0}{F_0}    \\
&\qquad  +\frac{1}{2} \int_{\BR^3} d\bk
     |\hat{f}_0(\bk)|^2 \inner{F_0}{[\Omega_\bA(\bp+\bk)+\Omega_\bA(\bp-\bk)-2\Omega_\bA(\bp)]F_0}.     
\end{align*}
Then the lemma follows.
\end{proof}
The following inequality is the key to the proof of Theorem \ref{t1}
\begin{lmm}\label{l1}
 For all $M\geq 0$ and $\bk\in\BR^3$, the operator inequality
\begin{align}
\frac{1}{2}\left\{ \Omega_\bA(\bp+\bk)+\Omega_\bA(\bp-\bk)-2\Omega_\bA(\bp) \right\}\leq
  \sqrt{\bk^2+M^2}-M,  \label{key}
\end{align}
holds on $\dom(\Omega_\bA(\bp))$.
\end{lmm}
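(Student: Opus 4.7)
The plan is a two-step reduction. Writing $\Omega = \Omega_\bA(\bp)$ and $\Omega_\pm = \Omega_\bA(\bp \pm \bk)$, I will first prove $\Omega_+ + \Omega_- \leq 2\sqrt{\Omega^2 + \bk^2}$, and then $\sqrt{\Omega^2 + \bk^2} - \Omega \leq \sqrt{\bk^2 + M^2} - M$; together they give (\ref{key}). The backbone is the algebraic identity
$$\Omega_\pm^2 = \Omega^2 + \bk^2 \pm 2\, Y, \qquad Y := \bk\cdot(\bp + e\bA),$$
which follows from the Pauli anticommutation $\{\bsigma\cdot X,\bsigma\cdot \bk\} = 2\, X\cdot \bk$ applied to $X = \bp + e\bA$; this is legitimate because $\bk$ is a c-number vector commuting with every component of $X$. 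In particular, the arithmetic mean of $\Omega_+^2 + s$ and $\Omega_-^2 + s$ is $\Omega^2 + \bk^2 + s$ for each $s > 0$.

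Inserting the standard representation $\sqrt{A} = \pi^{-1}\int_0^\infty A(A+s)^{-1}\,s^{-1/2}\,ds$ for positive self-adjoint $A$ yields
$$\Omega_+ + \Omega_- - 2\Omega = \frac{1}{\pi}\int_0^\infty \sqrt{s}\,\bigl[2(\Omega^2+s)^{-1} - (\Omega_+^2+s)^{-1} - (\Omega_-^2+s)^{-1}\bigr]\,ds.$$
Operator convexity of $x \mapsto 1/x$ on $(0,\infty)$ gives $(\Omega_+^2+s)^{-1} + (\Omega_-^2+s)^{-1} \geq 2(\Omega^2 + \bk^2 + s)^{-1}$, and the scalar resolvent identity $(\Omega^2+s)^{-1} - (\Omega^2 + \bk^2 + s)^{-1} = \bk^2 (\Omega^2+s)^{-1}(\Omega^2+\bk^2+s)^{-1}$ then bounds the bracketed integrand above by $2\bk^2(\Omega^2+s)^{-1}(\Omega^2+\bk^2+s)^{-1}$. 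All operators remaining in the integral commute with $\Omega^2$, so the integration reduces to scalar spectral calculus; the explicit evaluation
$$\int_0^\infty \frac{\sqrt{s}\, ds}{(\mu + s)(\mu + \bk^2 + s)} = \frac{\pi}{\bk^2}\bigl(\sqrt{\mu + \bk^2} - \sqrt{\mu}\bigr)$$
(via $s = t^2$ and partial fractions) produces $\Omega_+ + \Omega_- - 2\Omega \leq 2(\sqrt{\Omega^2 + \bk^2} - \Omega)$, which is the first claimed bound.

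For the second bound, set $\beta = \sqrt{\bk^2 + M^2} - M \geq 0$, so that $\beta^2 + 2M\beta = \bk^2$. Since the form $K_\bA$ satisfies $K_\bA(\Psi,\Psi) \geq M^2\|\Psi\|^2$, we have $\Omega^2 = H_\bA \geq M^2$ and hence $\Omega \geq M$; therefore $\bk^2 = 2M\beta + \beta^2 \leq 2\Omega\beta + \beta^2$, i.e., $\Omega^2 + \bk^2 \leq (\Omega + \beta)^2$, and operator monotonicity of the square root gives $\sqrt{\Omega^2 + \bk^2} \leq \Omega + \beta$. The main technical obstacle is domain bookkeeping: by (\ref{good}) the operators $\Omega_\pm$ have form domains $e^{\mp i\bk\cdot\bx}\dom(\Omega)$, which do not coincide with $\dom(\Omega)$ in general, so I expect to carry out the integral representation and the convexity argument first as quadratic-form identities on the invariant core $\cD$ and extend by continuity of the closed form $\bar K_\bA$ to all of $\dom(\Omega)$.
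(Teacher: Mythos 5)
Your proof is correct, but it reaches \eqref{key} by a genuinely different route from the paper's. Both arguments ultimately rest on the same two algebraic facts: the cross-term cancellation $\Omega_\bA(\bp+\bk)^2+\Omega_\bA(\bp-\bk)^2=2(\Omega_\bA(\bp)^2+\bk^2)$ coming from the Pauli anticommutation relations, and the lower bound $\Omega_\bA(\bp)\geq M$. The difference is in how the square roots are handled. You prove the intermediate inequality $\Omega_\bA(\bp+\bk)+\Omega_\bA(\bp-\bk)\leq 2\sqrt{\Omega_\bA(\bp)^2+\bk^2}$ (essentially operator concavity of the square root in this special case) via the integral representation of $A^{1/2}$, operator convexity of the inverse, and an explicit evaluation of the resulting scalar integral, and then finish with $\sqrt{\Omega_\bA(\bp)^2+\bk^2}\leq \Omega_\bA(\bp)+\sqrt{\bk^2+M^2}-M$ by square-root monotonicity. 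The paper collapses both steps into one: it bounds $\norm{[\Omega_\bA(\bp+\bk)+\Omega_\bA(\bp-\bk)]\Psi}^2$ by $2(\norm{\Omega_\bA(\bp+\bk)\Psi}^2+\norm{\Omega_\bA(\bp-\bk)\Psi}^2)=4\bk^2\norm{\Psi}^2+4\norm{\Omega_\bA(\bp)\Psi}^2$, expands $4[\sqrt{\bk^2+M^2}-M+\Omega_\bA(\bp)]^2$, discards the nonnegative term $2(\sqrt{\bk^2+M^2}-M)(\Omega_\bA(\bp)-M)$, and then applies the monotonicity $A^2\leq B^2\Rightarrow A\leq B$ for nonnegative operators a single time. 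The paper's version is shorter and more elementary (one parallelogram-type inequality plus one application of Heinz--L\"owner monotonicity, versus your integral representation plus convexity of the inverse); your version buys a cleaner conceptual decomposition into an ``averaging'' step independent of $M$ and a purely spectral step, at the cost of more machinery. Your caution about the domains is not needed in the end --- the form norms of $K_\bA$ before and after conjugation by $e^{i\bk\cdot\bx}$ are mutually bounded, so the form domains of $\Omega_\bA(\bp\pm\bk)$ and $\Omega_\bA(\bp)$ do coincide, as the paper asserts --- but working on the invariant core $\cD$ and closing, as you propose, is a perfectly sound way to sidestep the issue.
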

\begin{proof}
Note that the domains of $\Omega_\bA(\bp+\bk)$,  $\Omega_\bA(\bp-\bk)$ and $\Omega_\bA(\bp)$
are identical.
\eqref{key} is equivalent to
\begin{align}
 & \Omega_\bA(\bp+\bk)+\Omega_\bA(\bp-\bk) \leq 2(\sqrt{\bk^2+M^2}-M + \Omega_\bA(\bp)).  \label{key2}
\end{align}
By the Kato-Rellich Theorem, \eqref{key2} follows from
\begin{align}
\norm{[ \Omega_\bA(\bp+\bk)+\Omega_\bA(\bp-\bk)]\Psi}^2 \leq
  \norm{2[\sqrt{\bk^2+M^2}-M + \Omega_\bA(\bp)]\Psi}^2  \label{ineq1}
\end{align}
for
$\Psi \in \dom(\Omega_\bA(\bp))$.
We have the bound
\begin{align}
 \norm{[ \Omega_\bA(\bp+\bk)+\Omega_\bA(\bp-\bk)]\Psi}^2
 &\leq 2[\norm{ \Omega_\bA(\bp+\bk)\Psi}^2 + \norm{\Omega_\bA(\bp-\bk)^2\Psi}^2] \notag \\
 & = 4\bk^2\norm{\Psi}^2 + 4\norm{\Omega_\bA(\bp)\Psi}^2,\label{ineq2}
\end{align}
for all $\Psi\in\dom(\Omega_\bA(\bp))$.
While we have 
\begin{align}
& 4[\sqrt{\bk^2+M^2}-M + \Omega_\bA(\bp)]^2 \notag \\
& =  4[ \bk^2 + [\bsigma\cdot(\bp+e\bA)]^2 +M^2
       +2(\sqrt{\bk^2+M^2}-M)(\Omega_\bA(\bp)-M)],         \label{ineq3}
\end{align}
in the sense of form on $\dom(\Omega_\bA(\bp))$.
Since $\Omega_\bA(\bp)-M$ is positive, inequality \eqref{ineq2} and equality \eqref{ineq3}
imply \eqref{ineq1}. Therefore inequality \eqref{key} holds.
\end{proof}
\begin{crl}
\label{hayai}
It follows that
\begin{eqnarray*}
   \int_{\BR^3}d\by  \inner{\Phi_\by}{H^V\Phi_\by}
\leq
  \inner{F_0}{H^0F_0}        +   \inner{f_0}{H_{\rm p}f_0}.
\end{eqnarray*}
\end{crl}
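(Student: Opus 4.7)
The plan is to combine Lemma 3.1 and Lemma 3.2 essentially directly, with a Fourier/Parseval identification for the semi-relativistic kinetic energy.

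First, I start from the identity provided by Lemma 3.1. Since $f_0$ and $F_0$ were chosen to be normalized, $\norm{f_0}^2 = \norm{F_0}^2 = 1$, so the first two terms on the right-hand side collapse to
$\inner{F_0}{H^0 F_0} + \inner{f_0}{V f_0}$. The only nontrivial term is the $\bk$-integral involving the symmetric difference of $\Omega_\bA$.

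Next, I apply Lemma 3.2 inside the integrand. Because $F_0 \in \cD \subset \dom(\Omega_\bA(\bp))$, the operator inequality \eqref{key} can be sandwiched by $F_0$ to yield, for each $\bk\in\BR^3$,
\begin{equation*}
\tfrac{1}{2}\inner{F_0}{[\Omega_\bA(\bp+\bk)+\Omega_\bA(\bp-\bk)-2\Omega_\bA(\bp)]F_0} \leq (\sqrt{\bk^2+M^2}-M)\norm{F_0}^2.
\end{equation*}
Since $|\hat{f}_0(\bk)|^2 \geq 0$ acts as a pointwise nonnegative weight, I can integrate this inequality in $\bk$ without reversing the sign, obtaining an upper bound of $\int_{\BR^3}d\bk \, |\hat f_0(\bk)|^2 (\sqrt{\bk^2+M^2}-M)$ for the last term of Lemma 3.1.

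Finally, I recognize this last quantity by Parseval as the expectation of the semi-relativistic kinetic energy:
\begin{equation*}
\int_{\BR^3}d\bk \, |\hat{f}_0(\bk)|^2 (\sqrt{\bk^2+M^2}-M) = \inner{f_0}{(\sqrt{\bp^2+M^2}-M)f_0}.
\end{equation*}
Adding the $\inner{f_0}{Vf_0}$ term yields $\inner{f_0}{H_{\rm p}f_0}$ by definition of $H_{\rm p}$, and the corollary follows.

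The argument is essentially bookkeeping; the only minor point to be careful about is that one is allowed to apply the operator inequality of Lemma 3.2 in quadratic form against $F_0$ pointwise in $\bk$, which is justified because $\cD$ lies in the common domain of all $\Omega_\bA(\bp\pm\bk)$ (as noted at the start of the proof of Lemma 3.2). There is no real obstacle; the substance has been done in Lemmas 3.1 and 3.2.
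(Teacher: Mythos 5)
Your proposal is correct and follows essentially the same route as the paper's own proof: apply Lemma \ref{l0}, use the normalizations $\norm{f_0}=\norm{F_0}=1$, bound the $\bk$-integrand via the quadratic-form version of Lemma \ref{l1}, and identify $\int|\hat f_0(\bk)|^2(\sqrt{\bk^2+M^2}-M)\,d\bk + \inner{f_0}{Vf_0}$ with $\inner{f_0}{H_{\rm p}f_0}$ by Parseval. No issues.
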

\begin{proof}
By using Lemmas \ref{l0} and \ref{l1},
 we have
\begin{align*}
   \int_{\BR^3}d\by  \inner{\Phi_\by}{H^V\Phi_\by}
  \leq&  \inner{F_0}{H^0F_0}        +   \inner{f_0}{Vf_0}
    +\int_{\BR^3} d\bk  (\sqrt{\bk^2+M^2}-M) |\hat{f}_0(\bk)|^2\\
    =&
\inner{F_0}{H^0F_0}        +   \inner{f_0}{H_{\rm p}f_0}.
\end{align*}
Then the corollary follows.
\end{proof}
{\it Proof of Theorem \ref{t1}:}
By Corollary \ref{hayai}
and the definitions of $F_0$ and $f_0$, we have
\begin{align*}
    \int_{\BR^3}d\by \left[-\inner{\Phi_\by}{H^V\Phi_\by} +(E^0  -e_0+2\ep)\norm{\Phi_\by}^2 \right]  >0.
\end{align*}
Therefore, there exists a vector $y\in\BR^3$ such that $\Phi_\by\neq 0$ and
\begin{align*}
  E^V\norm{\Phi_\by}^2 \leq    \inner{\Phi_\by}{H^V\Phi_\by}  <  (E^0  -e_0+2\ep)\norm{\Phi_\by}^2.
\end{align*}
Since $\ep$ is arbitrary, this yields \eqref{key} and completes the proof of the theorem.

\section*{Comments and Acknowledgments}
This paper provides an improved version of the result presented in the international  conference
``Applications of RG Methods in Mathematical Science'' held in Kyoto University in Sept. 2009.
IS is grateful to K. R. Ito for inviting me to the conference.
We are grateful to T. Miyao for bring \cite{kms} to our attention.
This study was performed through Special Coordination Funds for
Promoting Science and Technology of the Ministry of Education,
Culture, Sports, Science and Technology, the Japanese Government.
 FH acknowledges support of Grant-in-Aid for
Science Research (B) 20340032 from JSPS.


\end{document}